\date{\today}
\newcommand{\be}{\begin{eqnarray}}
\newcommand{\ee}{\end{eqnarray}}
\newcommand{\R}{\mathbb{R}}
\newcommand{\half}{\frac{1}{2}}
\newcommand{\ep}{\varepsilon}
\newcommand{\om}{\Omega}
\newcommand{\supp}{{\rm supp}\,}
\newcommand{\1}{{\bf 1}}
\newcommand{\diag}{{\rm diag}\,}
\newcommand{\n}{{\bf n}}
\newcommand{\N}{{\bf N}}
\newcommand{\F}{{\bf F}}
\newcommand{\x}{{\bf x}}
\newcommand{\U}{{\bf U}}
\newcommand{\V}{{\bf V}}
\newcommand{\z}{{\bf z}}
\newcommand{\m}{{\bf m}}
\newcommand{\B}{{\bf B}}
\newcommand{\Q}{{\bf Q}}
\newcommand{\az}{{\bf a}}
\newcommand{\e}{{\bf e}}
\newcommand{\y}{{\bf y}}
\newcommand{\A}{{\bf A}}
\newcommand{\Rz}{{\bf R}}
\newtheorem{thm}{Theorem}
\newtheorem{lem}[thm]{Lemma}
\def\def\IPEfile{#}\input{\IPEfile}1{\def\IPEfile{#1}\input{\IPEfile}}
\title{Interaction of martensitic microstructures in adjacent grains}
\begin{document}

Submitted to Proceedings of ICOMAT 2017\vspace{.1in}

\maketitle
\vspace{-.2in}
\begin{center}%
   John M.  Ball\\
{\footnotesize Mathematical Institute, University of Oxford, Woodstock Road, 
Oxford OX2 6GG, U.K.}  
\\[5mm]\  Carsten Carstensen \\{\footnotesize Department of Mathematics, Humboldt-Universit\"{a}t zu
Berlin, Unter den Linden 6, D-10099 Berlin, Germany}
{\footnotesize }
\end{center}
\vspace{-.12in}
\date{empty}
\begin{abstract}
It is often observed that martensitic microstructures in adjacent polycrystal grains are related. For example, micrographs of Arlt \cite{Arlt90} (one reproduced in \cite[p225]{bhattacharya03a}) exhibit propagation of layered structures across grain boundaries in the cubic-to-tetragonal phase transformation in $\rm BaTiO_3$. Such observations are related to requirements of compatibility of the deformation at the grain boundary. Using a generalization of the Hadamard jump condition, this is explored in the nonlinear elasticity model of martensitic transformations for the case of a bicrystal with suitably oriented columnar geometry, in which the microstructure in both grains is assumed to involve just two martensitic variants, with a planar or non-planar interface between the grains. 
\end{abstract}
\vspace{.05in}

\noindent Keywords: Bicrystal, compatibility, grain boundary,  Hadamard jump condition.
\vspace{-.04in}

\section{Description of problem}
\numberwithin{equation}{section}
Consider a bicrystal consisting of two columnar grains  
$\om_1=\omega_1\times (0,d)$ (grain 1), $\om_2=\omega_2\times (0,d)$ (grain 2), where $d>0$ and $\omega_1, \omega_2\subset\R^2$ are bounded Lipschitz domains whose boundaries $\partial\omega_1, \partial\omega_2$  intersect nontrivially, so that   $\partial\omega_1\cap\partial\omega_2$ contains points in the interior   $\omega$  of $\overline\omega_1\cup\overline\omega_2$ (see Fig. \ref{bicrystal}). Let $\om=\omega\times(0,d)$. The interface between the grains is the set $\partial\Omega_1\cap\partial\om_2\cap\om=(\partial\omega_1\cap\partial\omega_2 \cap\omega)\times (0,d)$. Since by assumption the boundaries $\partial\omega_1, \partial\omega_2$ are locally the graphs of Lipschitz functions, and such functions are differentiable almost everywhere, the interface has  at almost every point (with respect to area) a well-defined normal $\n(\theta)=(\cos\theta,\sin\theta,0)$ in  the $(x_1,x_2)$ plane. We say that the interface is {\it planar} if it is contained in some plane $\{\x\cdot\n=k\}$ for a fixed normal $\n$ and constant $k$.
\begin{figure}[htp] 
  \centering
  \includegraphics[width=4.42in,height=1.53in,keepaspectratio]{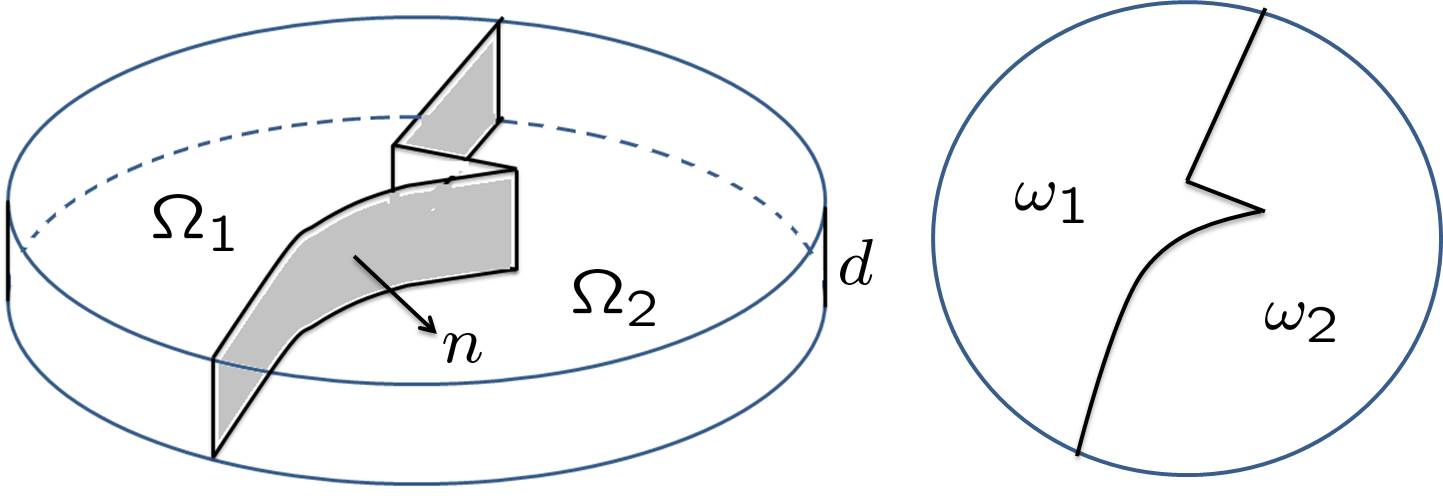}
  \caption{Bicrystal consisting of two grains $\Omega_1=\omega_1\times(0,d), \Omega_2=\omega_2\times(0,d).$}
  \label{bicrystal}
\end{figure}

We use the nonlinear elasticity model of martensitic transformations from \cite{j32,j40}, with corresponding free-energy density $\psi(\nabla \y, \theta)$ for  a single crystal at  temperature $\theta$ and deformation $\y=\y(\x)$ with respect to undistorted austenite at the critical temperature $\theta_c$ at which the austenite and martensite have the same free energy.   We denote by $\R^{n\times n}_+$ the set of real $n\times n$ matrices $\A$ with $\det\A>0$, and by $SO(n)$ the set of rotations in $\R^n$. At a fixed temperature $\theta<\theta_c$, we suppose that  
\be
\label{energywells}K=SO(3)\U_1\cup SO(3)\U_2 
\ee
is the set of  $\A\in\R^{3\times 3}_+$ minimizing $\psi(\A,\theta)$, where  $\U_1=\diag (\eta_2,\eta_1,\eta_3)$, $\U_2=\diag(\eta_1,\eta_2,\eta_3)$ and   $\eta_2>\eta_1>0$,   $\eta_3>0$. This corresponds to a tetragonal to  orthorhombic phase transformation (see \cite[Table 4.6]{bhattacharya03a}), or to an orthorhombic to monoclinic  transformation in which the transformation strain involves stretches of magnitudes $\eta_1, \eta_2$ with respect to perpendicular directions lying in the plane of two of the orthorhombic axes and making an angle of $\pi/4$ with respect to these axes\footnote{The general form of the transformation stretch for an orthorhombic to monoclinic transformation is given in \cite[Theorem 2.10(4)]{j40}. In general one can make a linear transformation of variables in the reference configuration which turns the corresponding energy wells into the form \eqref{energywells}. However, in \cite[Section 4.1]{j48} and the announcement of the results of the present paper in \cite{p35} it was incorrectly implied that the analysis based on $K$ as in \eqref{energywells} applies to a general orthorhombic to monoclinic transformation. This is not the case because the linear transformation in the reference configuration changes the deformation gradient corresponding to austenite in \cite{j48} and to the rotated grain in the present paper. A more general, but feasible, analysis would be needed to cover the case of general orthorhombic to monoclinic transformations.}.  Alternatively, for example, taking $\eta_3=\eta_1$ the analysis of this paper can be viewed as applying to a cubic to tetragonal transformation under the {\it a priori} assumption that only two variants are involved in the microstructure. 

We  suppose that $\om_1$   has cubic axes in the coordinate directions $\e_1, \e_2, \e_3$, while in $\om_2$   the cubic axes are rotated through an angle $\alpha$ about $\e_3$. By adding a constant to $\psi$ we may assume that $\psi(\A,\theta)=0$ for $\A\in K$. Then a zero-energy microstructure corresponds to a gradient Young measure\footnote{For an explanation of gradient Young measures and how they can be used to represent possibly infinitely fine microstructures see, for example, \cite{j56}.} $(\nu_\x)_{\x\in\om}$ such that 
\be 
\label{minmicro}
\supp \nu_\x\subset K \mbox{ for a.e. }\x\in\om_1,\;\;\supp\nu_\x\subset K\Rz_\alpha \mbox{ for a.e. }\x\in\om_2,
\ee
where $$\Rz_\alpha=\left(\begin{array}{lll} \cos\alpha&-\sin\alpha&0\\ \sin\alpha&\cos\alpha& 0\\
0&0&1\end{array}\right).$$     It is easily  shown that $K\Rz_\alpha=K$ if and only if $\alpha=n\pi/2$ for some integer $n$, and that $K\Rz_{\alpha+\pi/2}=K\Rz_\alpha$. We thus assume that $0<\alpha<\frac{\pi}{2}$, since this covers all nontrivial cases.

As remarked in \cite{p35}, by a result from \cite{bhattacharya92} there always exists a   zero-energy microstructure constructed using laminates, with gradient Young measure $\nu_\x=\nu$ satisfying \eqref{minmicro} that is independent of $x$ and has macroscopic deformation gradient $\bar\nu=\int_{\R^{3\times 3}_+}\A\,d\nu(\A)=(\det \U_1)\1$. Our aim is to give conditions on the deformation parameters $\eta_1,\eta_2,\eta_3$, the rotation angle $\alpha$ and the grain geometry which ensure that {\it any} zero-energy microstructure has a degree of complexity in each grain, in the sense that it does not correspond to a pure variant with constant deformation gradient in either of the grains.

\section{Rank-one connections between energy wells}
\label{rankone}
Let $\U=\U^T>0$, $\V=\V^T>0$. We say that the energy wells ${\rm SO}(3)\U$, ${\rm SO}(3)\V$ are {\it rank-one connected} if there exist $\Rz, \Q\in SO(3)$, $\az,\n\in\R^3$, $|\n|=1$ with $\Rz\U=\Q\V+\az\otimes\n$, where without loss of generality we can take $\Q=\1$. By the Hadamard jump condition this is equivalent to the existence of a continuous piecewise affine map $\y$ whose gradient $\nabla\y$ takes constant values $\A\in SO(3)\U$ and $\B\in SO(3)\V$ on either side of a plane with normal $\n$. The following is an apparently new version of a well-known result (see, for example, \cite[Prop. 4]{j32}, \cite[Theorem 2.1]{j56}, \cite{Khachaturyan83}),  giving necessary and sufficient conditions for two wells to be rank-one connected. A similar statement was obtained by  Mardare \cite{mardare}.
\begin{lem}\label{rankonea}
Let $\U=\U^T>0$, $\V=\V^T>0$. Then ${\rm SO}(3)\U$, ${\rm SO}(3)\V$ are rank-one connected if and only if
\be 
\label{conn}\U^2-\V^2= \gamma(\m\otimes \n+\n\otimes \m)
\ee
for unit vectors $\m$, $\n$ and some $\gamma\ne 0$. 
For suitable $\az_1,\az_2\in\R^3$ and 
 $\Rz_1,\Rz_2\in{\rm SO}(3)$, the rank-one connections between $\V$ and ${\rm SO}(3)\U$ are given by
\be 
\label{normals}
 \Rz_1\U=\V+\az_1\otimes \n,\quad \Rz_2\U=\V+\az_2\otimes \m.
\ee
\end{lem}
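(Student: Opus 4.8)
\noindent\emph{Proof proposal.} The plan is to establish the two implications separately. Necessity will come from a direct algebraic identity for $\U^2$; sufficiency will be an explicit construction of $\Rz$ and $\az$ whose feasibility is reduced to the solvability of a scalar quadratic, and that solvability will turn out to be precisely a restatement of the positive definiteness of $\U^2$.

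For necessity, suppose $\Rz\U=\V+\az\otimes\n$ with $\Rz\in{\rm SO}(3)$, $|\n|=1$ and $\az\ne\bzero$. Writing $\U^2=(\Rz\U)^T(\Rz\U)=(\V+\n\otimes\az)(\V+\az\otimes\n)$ and using $(\n\otimes\az)\V=\n\otimes(\V\az)$, $\V(\az\otimes\n)=(\V\az)\otimes\n$ and $(\n\otimes\az)(\az\otimes\n)=|\az|^2\,\n\otimes\n$ gives
\[
\U^2-\V^2=\n\otimes\bz+\bz\otimes\n,\qquad \bz:=\V\az+\tfrac12|\az|^2\n .
\]
Here $\bz\ne\bzero$: were $\bz=\bzero$ we would have $\U^2=\V^2$, hence $\U=\V$, and the connection would read $\Rz=\1+\az\otimes(\U^{-1}\n)\in{\rm SO}(3)$, forcing $\az=\bzero$ since a rank-one perturbation of $\1$ lies in ${\rm SO}(3)$ only when it equals $\1$ (the reflections being excluded by the determinant). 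Setting $\gamma:=|\bz|>0$, $\m:=\bz/|\bz|$ then gives \eqref{conn}, with $\n$ the normal of the given connection. Carrying this out once for a connection with normal $\n$ and once for one with normal $\m$ produces both unit vectors in \eqref{conn}; conversely, any connection supplies, by the above, a normal entering such a representation, and since $\U^2-\V^2$ admits (up to signs) exactly the two decompositions of this type, associated with the directions $\n$ and $\m$, every connection has normal $\pm\n$ or $\pm\m$.

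For sufficiency, assume \eqref{conn}. Reversing the identity above, it suffices to produce $\az\ne\bzero$ with $\V\az+\tfrac12|\az|^2\n=\gamma\m$, for then $\Rz:=(\V+\az\otimes\n)\U^{-1}$ satisfies $\Rz^T\Rz=\U^{-1}\bigl(\V^2+\gamma(\m\otimes\n+\n\otimes\m)\bigr)\U^{-1}=\1$, so $\Rz\in O(3)$, and only $\det\Rz>0$ remains. Setting $r:=\tfrac12|\az|^2\ge0$ gives $\az=\V^{-1}(\gamma\m-r\n)$, and imposing $|\az|^2=2r$ reduces everything to the scalar quadratic
\[
|\V^{-1}\n|^2\,r^2-2\bigl(1+\gamma\,\V^{-1}\m\cdot\V^{-1}\n\bigr)r+\gamma^2|\V^{-1}\m|^2=0 .
\]
The crux is that this quadratic is governed by $\mathbf C:=\V^{-1}\U^2\V^{-1}=\1+\gamma\bigl(\V^{-1}\m\otimes\V^{-1}\n+\V^{-1}\n\otimes\V^{-1}\m\bigr)$, which is symmetric and positive definite because $\U^2>0$. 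Besides the eigenvalue $1$, $\mathbf C$ has two further eigenvalues; a short computation shows that one quarter of the discriminant $\Delta$ of the quadratic equals their product, and that the sum of the roots is a positive multiple of their sum. Positive definiteness of $\mathbf C$ therefore makes both strictly positive, so the quadratic has two distinct positive roots. Choosing $r$ to be the smaller one, a direct computation gives $\det(\V+\az\otimes\n)=\det\V\,\bigl(1+\gamma\,\V^{-1}\m\cdot\V^{-1}\n-r|\V^{-1}\n|^2\bigr)=\tfrac12\sqrt{\Delta}\,\det\V>0$, so $\Rz\in{\rm SO}(3)$; moreover $\az\ne\bzero$, since $\az=\bzero$ would force $\U^2=\V^2$, contradicting $\gamma\ne0$. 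This yields $\Rz_1\U=\V+\az_1\otimes\n$, and, because \eqref{conn} is symmetric under interchanging $\m$ and $\n$, the identical construction with $\m$ in the role of $\n$ yields $\Rz_2\U=\V+\az_2\otimes\m$, which is \eqref{normals}.

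The only genuinely substantive step is this discriminant identity, i.e.\ recognising that solvability of the quadratic by a suitable $r$ is equivalent to $\U^2>0$; the rest is routine manipulation of rank-one updates. The main care needed is to select the root of the quadratic that makes $\Rz$ orientation preserving rather than the one giving $\det(\V+\az\otimes\n)<0$.
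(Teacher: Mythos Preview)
The paper itself omits the proof of this lemma (``We omit the proof, which is not difficult''), so there is no in-paper argument to compare against. Your proposal is a complete and correct proof; in particular the key identity $\Delta/4=\lambda_1\lambda_2$, with $\lambda_1,\lambda_2$ the two eigenvalues of $\mathbf C=\V^{-1}\U^2\V^{-1}$ different from $1$, checks out: writing $\p=\V^{-1}\m$, $\q=\V^{-1}\n$, the nonzero eigenvalues of $\p\otimes\q+\q\otimes\p$ are $\p\cdot\q\pm|\p||\q|$, so $\lambda_{1,2}=1+\gamma(\p\cdot\q\pm|\p||\q|)$ and $\lambda_1\lambda_2=(1+\gamma\,\p\cdot\q)^2-\gamma^2|\p|^2|\q|^2=\Delta/4$. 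The determinant computation and the choice of the smaller root to secure $\det\Rz>0$ are also correct.

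Two minor expository points. First, the sentence ``Carrying this out once for a connection with normal $\n$ and once for one with normal $\m$ \ldots'' is premature in the necessity paragraph, since at that stage you have only exhibited one connection; the existence of the second, with normal $\m$, is only established by the sufficiency construction together with the $\m\leftrightarrow\n$ symmetry of \eqref{conn}. Second, your claim that the decomposition $\U^2-\V^2=\gamma(\m\otimes\n+\n\otimes\m)$ is unique up to signs and swapping $\m\leftrightarrow\n$ is true (the nonzero eigenvalues $\gamma(1\pm\m\cdot\n)$ and eigenvectors $\m\pm\n$ of the right-hand side determine $\m,\n$ up to these symmetries), but it would strengthen the write-up to spell this out rather than assert it.
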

\noindent We omit the proof, which is not difficult. The main point of the lemma is that the normals corresponding to the rank-one connections are the vectors appearing in \eqref{conn}.  An interesting consequence is that if $\U,\V$ correspond to martensitic variants, so that $\V=\Q^T\U\Q$ for some $\Q\in SO(3)$, then, taking the trace in \eqref{conn} shows that the two possible normals are orthogonal (see \cite[Theorem 2.1]{j56}).

Using Lemma \ref{rankonea} we can calculate the rank-one connections between  $K$ and  $K\Rz_\alpha$. For example, for the rank-one connections between 
 ${\rm SO}(3)\,\U_1$ and ${\rm SO}(3)\,\U_1\Rz_\alpha$ we find that 
\be 
\label{1} 
\U_1^2-\Rz_\alpha^T\U_1^2\Rz_\alpha\,=(\eta_2^2-\eta_1^2)\sin\alpha\,(\m\otimes \n+\n\otimes \m)
\ee
where $\n=(\sin(\alpha/2),\cos (\alpha/2),0)$, $\m=(\cos (\alpha/2),-\sin (\alpha/2),0)$ so that the two possible normals $\n=(n_1,n_2,0)$ satisfy
$\tan\alpha= 2n_1n_2/(n_2^2-n_1^2)$.
Swapping $\eta_1$ and $\eta_2$ we see that the possible normals for rank-one connections between ${\rm SO}(3)\,\U_2$ and ${\rm SO}(3)\,\U_2\Rz_\alpha$ are the same. Similarly, we find that the possible normals for rank-one connections between 
 ${\rm SO}(3)\U_1$ and ${\rm SO}(3)\U_2\Rz_\alpha$ or between  ${\rm SO}(3)\U_2$ and ${\rm SO}(3)\U_1\Rz_\alpha$ satisfy
$\tan\alpha= (n_1^2-n_2^2)/2n_1n_2$.

\section{Main results}
\label{main}
Suppose  there exists a  gradient Young measure of the form \eqref{minmicro} such that $\nu_\x=\delta_\F$ for a.e. $\x\in \om_1$ for some $\F\in K$, corresponding to a pure variant in grain 1. It follows that the corresponding macroscopic gradient $\nabla\y(\x)=\bar\nu_\x=\int_{M^{3\times 3}_+}\A d\nu_x(\A)$ satisfies
\be 
\label{macro}
\nabla\y(\x)=\F \mbox{ for a.e. }\x\in\om_1, \;\;\nabla\y(\x)\in (K\Rz_\alpha)^{\rm qc} \mbox{ for a.e. }\x\in\om_2,
\ee 
where $E^{\rm qc}$ denotes the quasiconvexification of a compact set $E\subset \R^{3\times 3}$, that is the set of possible macroscopic deformation gradients corresponding to microstructures using gradients in $E$. As determined in \cite[Theorem 5.1]{j32} (and more conveniently in \cite[p155]{bhattacharya03a}) $K^{\rm qc}$ consists of those $\A\in \R_+^{3\times 3}$ with
\be 
\label{qc}
\A^T\A=\left(\begin{array}{ccc}a&c&0\\c&b&0\\ 0&0&\eta_3^2\end{array}\right)\mbox{ and }\;(a,b,c)\in P,
\ee 
\be 
\label{qc1}
P=\{(a,b,c): a>0,\; b>0,\; ab-c^2=\eta_1^2\eta_2^2,\; a+b+|2c|\leq \eta_1^2+\eta^2_2\},
\ee
and is equal to the polyconvexification $K^{pc}$ of $K$. It follows that $(K\Rz_\alpha)^{\rm qc}=K^{\rm qc}\Rz_\alpha=K^{\rm pc}\Rz_\alpha$. The proof of \eqref{qc} shows also that the quasiconvexification $\tilde K^{\rm qc}$ of 
$\tilde K=SO(2)\tilde\U_1\cup SO(2)\tilde\U_2$, 
where $\tilde\U_1=\left(\begin{array}{cc}\eta_2& 0\\ 0&\eta_1\end{array}\right)$, $\tilde\U_2=\left(\begin{array}{cc}\eta_1& 0\\ 0&\eta_2\end{array}\right)$,
is equal to $\tilde K^{\rm pc}$ and is given by the set of $\tilde\A\in \R^{2\times 2}_+$ such that 
$
\tilde\A^T\tilde \A=\left(\begin{array}{cc}a&c\\c&b\end{array}\right)\mbox{ for } (a,b,c)\in P.
$

Let $\x_0=(\tilde\x_0,\delta)$, where $\tilde\x_0\in\partial\omega_1\cap\partial\omega_2\cap\omega$, $0<\delta<d$, be such that the interface   has a well-defined normal $\n=(\cos\theta,\sin\theta,0)=(\tilde\n,0)$ at $\x_0$. By \cite{j39}, there exists $\ep>0$ such that in ${\mathcal U}:=B(\tilde\x_0,\ep)\times(\delta-\ep,\delta+\ep)$, $0<\ep<\delta$, the map $\y$ is a plane strain, that is 
\be 
\label{planestrain}
\y(\x)=\Rz(z_1(x_1,x_2),z_2(x_1,x_2),\eta_3 x_3 + \gamma)\mbox{ for a.e. }\x\in \mathcal U
\ee
 for some $\Rz\in SO(3)$, $\z:B(\tilde\x_0,\ep)\to\R^2$ and $\gamma\in\R$, where $B(\tilde\x_0,\ep)$ denotes the open ball in $\R^2$ with centre $\tilde\x_0$ and radius $\ep$. Without loss of generality we can take $\Rz=\1$. Then, for $\tilde\x=(x_1,x_2)\in B(\tilde\x_0,\ep)$  we have
\be 
\label{reduced}
\nabla \z(\tilde\x)=\tilde \F\in\tilde K \mbox{ for a.e. }\tilde\x\in \omega_1,\;\; \nabla \z(\tilde\x)\in (\tilde K\tilde\Rz_\alpha)^{\rm pc}=\tilde K^{\rm pc}\tilde\Rz_\alpha \mbox{ for a.e. }\tilde\x\in \omega_2,
\ee 
where  $\tilde\Rz_\alpha=\left(\begin{array}{cc}\cos\alpha&-\sin\alpha\\ \sin\alpha&\cos\alpha\end{array}\right)$. By a two-dimensional generalization of the Hadamard jump condition proved in \cite{u5} this implies that there exists $\tilde\A\in (\tilde K\tilde\Rz_\alpha)^{\rm pc}$ such that 
\be 
\label{r1}
\tilde \F-\tilde\A=\tilde\az\otimes\tilde\n 
\ee
for some $\tilde\az\in\R^2$. Conversely, if the interface is planar with normal  $\n=(\tilde\n,0)$, then the existence of $\tilde\A\in (\tilde K\tilde\Rz_\alpha)^{\rm pc}$ satisfying \eqref{r1} implies the existence of a gradient Young measure $\nu=(\nu_\x)_{\x\in\om}$ satisfying \eqref{macro}. Indeed there then exists a sequence of gradients $\nabla \z^{(j)}$ generating a gradient Young measure $(\mu_{\tilde\x})_{\tilde\x\in\omega}$ such that 
\be 
\label{r2}
\bar\mu_{\tilde\x}=\tilde\F \mbox{ for a.e. }\tilde \x\in\omega_1,\;\; \bar\mu_{\tilde\x}=\tilde\A \mbox{ for a.e. }\tilde\x\in\omega_2,
\ee
and then 
 $\nabla \y^{(j)}(\x)=\left(\begin{array}{ccc} z^{(j)}_{1,1}&z^{(j)}_{1,2}&0\\ z^{(j)}_{2,1}&z^{(j)}_{2,2}&0\\0&0&\eta_3\end{array}\right)$ 
generates such a gradient Young measure.

 It turns out that we can say exactly when it is possible to solve \eqref{r1}. Set $\tau:=\eta_2/\eta_1>1$, $s^*=(\tau^4-1)/(\tau^4+1)$ and define for $0\le s\le 1$ the $C^1$ convex increasing function
\be 
\label{f}
 f(s):= \left\{\begin{array}{cc}
(\tau^4+1-2\tau^2\sqrt{1-s^2})/(\tau^4-1) &\mbox{if }s\leq s^*,\\s&\mbox{if }s>s^*.\end{array}\right.
\ee
\begin{thm}
\label{char}
There exist $\tilde\F\in \tilde K$ and $\tilde \A\in \tilde K^{\rm pc}\Rz_\alpha$ with $\tilde\F-\tilde\A=\tilde\az\otimes \tilde\n$ for $\tilde\n=(\cos\theta,\sin\theta)$  and some $\tilde \az \in\R^2$ if and only if
\be 
\label{theta}
 |\cos 2\theta|\le f(|\cos 2(\alpha+\theta) |).
\ee
\end{thm}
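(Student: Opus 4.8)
The plan is to reduce \eqref{r1} to a statement about whether a two-element set of squared lengths meets a closed interval, and then to identify that interval by optimising a linear functional over $P$.

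First I would note that $\tilde\F-\tilde\A=\tilde\az\otimes\tilde\n$ holds for some $\tilde\az\in\R^2$ if and only if $(\tilde\F-\tilde\A)\tilde\n^\perp=\bzero$, where $\tilde\n^\perp=(-\sin\theta,\cos\theta)$; in particular this forces $|\tilde\F\tilde\n^\perp|=|\tilde\A\tilde\n^\perp|$. Conversely, since $\tilde K$ is invariant under left multiplication by $SO(2)$, whenever $\tilde\F\in\tilde K$ and $\tilde\A\in\tilde K^{\rm pc}\tilde\Rz_\alpha$ satisfy $|\tilde\F\tilde\n^\perp|=|\tilde\A\tilde\n^\perp|$ one may pick $\tilde\Q\in SO(2)$ with $\tilde\Q\tilde\F\tilde\n^\perp=\tilde\A\tilde\n^\perp$, so that $\tilde\Q\tilde\F\in\tilde K$ and $\tilde\Q\tilde\F-\tilde\A$ annihilates $\tilde\n^\perp$ and is hence of the form $\tilde\az\otimes\tilde\n$. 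Writing $\tilde\A=\tilde\B\tilde\Rz_\alpha$ with $\tilde\B\in\tilde K^{\rm pc}$ and $\vH:=\tilde\B^T\tilde\B$, the description of $\tilde K^{\rm pc}$ gives $|\tilde\A\tilde\n^\perp|^2=\tilde\m^T\vH\tilde\m$, where $\tilde\m:=\tilde\Rz_\alpha\tilde\n^\perp=(-\sin(\alpha+\theta),\cos(\alpha+\theta))$ and $(H_{11},H_{22},H_{12})\in P$, and conversely any such $\vH$ arises from $\tilde\B=\vH^{1/2}$. Since $|\tilde\F\tilde\n^\perp|^2$ runs over $\{|\tilde\U_1\tilde\n^\perp|^2,|\tilde\U_2\tilde\n^\perp|^2\}=\{\tfrac12(\eta_1^2+\eta_2^2)\pm\tfrac12(\eta_2^2-\eta_1^2)\cos2\theta\}$ as $\tilde\F$ ranges over $\tilde K$, I would conclude that \eqref{r1} is solvable if and only if one of these two numbers lies in $I(\phi):=\{\tilde\m^T\vH\tilde\m:\vH=\vH^T,\ (H_{11},H_{22},H_{12})\in P\}$, where $\phi:=\alpha+\theta$.

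The heart of the argument is the computation of $I(\phi)$. Since $P$ is compact and connected and $\vH\mapsto\tilde\m^T\vH\tilde\m$ is continuous and linear in $(H_{11},H_{22},H_{12})$, $I(\phi)=[m(\phi),M(\phi)]$ is a closed interval. Writing the functional as $g(a,b,c)=\tfrac12(a+b)+\tfrac12(b-a)\cos2\phi-c\sin2\phi$, I would first check that $g$ has no critical point in the relative interior of $P$ inside the surface $\{ab-c^2=\eta_1^2\eta_2^2\}$ — the Lagrange multiplier equations force $ab-c^2=0$ — so both extrema are attained on the boundary curve $\{a+b+2|c|=\eta_1^2+\eta_2^2\}\cap\{ab-c^2=\eta_1^2\eta_2^2\}$. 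Parametrising that curve by $r=|c|\in[0,c_{\max}]$ with $c_{\max}=(\eta_2^2-\eta_1^2)^2/\bigl(4(\eta_1^2+\eta_2^2)\bigr)$, $a+b=\eta_1^2+\eta_2^2-2r$, $ab=r^2+\eta_1^2\eta_2^2$, and with the sign of $b-a$ and the sign of $c$ free (reflecting the symmetry of $P$ under $a\leftrightarrow b$ and under $c\mapsto-c$), the values of $g$ on the curve become $\tfrac12(\eta_1^2+\eta_2^2)-r\pm\tfrac12|\cos2\phi|\sqrt{(\eta_2^2-\eta_1^2)^2-4(\eta_1^2+\eta_2^2)r}\pm r|\sin2\phi|$. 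Taking both signs $+$ yields a decreasing function of $r$, whence $M(\phi)=\tfrac12(\eta_1^2+\eta_2^2)+\tfrac12(\eta_2^2-\eta_1^2)|\cos2\phi|$ (which also follows at once from convexity of $\vH\mapsto\tilde\m^T\vH\tilde\m$ together with $\tilde K^{\rm pc}\subset\operatorname{conv}\tilde K$). Taking both signs $-$ yields a convex function of $r$; I would locate its unique critical point $r^*$, observe that $r^*\in[0,c_{\max}]$ precisely when $|\cos2\phi|\le s^*$, and then compute $m(\phi)=\tfrac{\eta_1^2\eta_2^2}{\eta_1^2+\eta_2^2}\bigl(1+|\sin2\phi|\bigr)$ in that regime, while for $|\cos2\phi|>s^*$ the minimum is at $r=0$ and $m(\phi)=\tfrac12(\eta_1^2+\eta_2^2)-\tfrac12(\eta_2^2-\eta_1^2)|\cos2\phi|$.

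Finally I would assemble \eqref{theta}. Because $m(\phi)\le\tfrac12(\eta_1^2+\eta_2^2)\le M(\phi)$, the larger of the numbers $\tfrac12(\eta_1^2+\eta_2^2)\pm\tfrac12(\eta_2^2-\eta_1^2)|\cos2\theta|$ lies in $I(\phi)$ iff $|\cos2\theta|\le|\cos2\phi|$, while the smaller one lies in $I(\phi)$ iff $\tfrac12(\eta_1^2+\eta_2^2)-\tfrac12(\eta_2^2-\eta_1^2)|\cos2\theta|\ge m(\phi)$. Substituting the two expressions for $m(\phi)$: when $|\cos2\phi|\le s^*$ the latter condition, after clearing denominators by $\eta_1^4$ and using $\tau=\eta_2/\eta_1$ and $s^*=(\tau^4-1)/(\tau^4+1)$, is exactly $|\cos2\theta|\le f(|\cos2\phi|)$; when $|\cos2\phi|>s^*$ it is $|\cos2\theta|\le|\cos2\phi|=f(|\cos2\phi|)$. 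Since $f(s)\ge s$ on $[0,1]$ (immediate from the definition of $f$, or from convexity of $f$ with $f=\mathrm{id}$ on $[s^*,1]$), in every case ``one of the two numbers lies in $I(\phi)$'' is equivalent to $|\cos2\theta|\le f(|\cos2(\alpha+\theta)|)$, which is \eqref{theta}. I expect the constrained optimisation of $g$ over $P$ to be the main obstacle — ruling out interior critical points, setting up the boundary curve with the correct sign freedom, and executing the one-variable minimisation so that the threshold comes out precisely $s^*$ and the minimum value matches $f$ exactly; the remaining steps are routine manipulations with half-angle identities.
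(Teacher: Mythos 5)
Your proposal is correct and takes essentially the same route as the paper: both reduce \eqref{r1} to asking whether one of the two values $\tfrac12(\eta_1^2+\eta_2^2)\pm\tfrac12(\eta_2^2-\eta_1^2)\cos 2\theta$ lies in the interval of values of the linear functional $(a,b,c)\mapsto aN_1^2+bN_2^2+2cN_1N_2$ over $P$, and then compute the extrema of that functional (the paper via the change of variables $(x,y)=(a+b,a-b)$ and the region $P_2$, you via Lagrange multipliers on the quadric plus the parametrization $r=|c|$ of the boundary curve), obtaining the same $m_\pm$ and the same final reduction to \eqref{theta}. The remaining differences are cosmetic bookkeeping, e.g.\ you settle the two-value disjunction using $f(s)\ge s$ where the paper uses $m_+(\N)+m_-(\N)\le\eta_1^2+\eta_2^2$.
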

\begin{proof}
It is easily checked that the existence of $\tilde\F\in SO(3)\tilde\U_i$ and $\tilde \A$ is equivalent to the existence of $(a,b,c)\in P$ such that $|\tilde \U_i\tilde\n^\perp|^2=aN_1^2+bN_2^2+2cN_1N_2$, where
$\tilde\n^\perp=(-n_2,n_1)$ and $\N=(N_1,N_2)=\tilde\Rz_\alpha\tilde\n^\perp$. That is 
\be 
\label{maxmin}
\mbox{ either }n_1^2\eta_1^2+n_2^2\eta_2^2 \mbox { or } n_2^2\eta_1^2+n_1^2\eta_2^2 \;\;\in [m_-(\N), m_+(\N)],
\ee
where $m_\pm(\N)=\mathop{\small \min}\limits^{\textstyle\small\max}_{(a,b,c)\in P}(aN_1^2+bN_2^2+2cN_1N_2)$. Changing variables to $x=a+b$ and $y=a-b$ we find that $m_+(\N)=\max_{(x,y)\in P_2}\psi_+(x,y)$,  $m_-(\N)=\min_{(x,y)\in P_2}\psi_-(x,y)$, where
\be 
\label{xy}
2\psi_\pm(x,y)=x+y(N_1^2-N_2^2)\pm2|N_1N_2|\sqrt{x^2-y^2-4\eta_1^2\eta_2^2},
\ee 
\[
P_2=\Big\{(x,y)\in\R^2:|y|\le\eta_2^2-\eta_1^2\mbox{ and } \sqrt{y^2+4\eta_1^2\eta_2^2}\le x\le\frac{y^2+4\eta_1^2\eta_2^2+(\eta_1^2+\eta_2^2)^2}{2(\eta_1^2+\eta_2^2)}\Big\}.\]
The region $P_2$ is bounded by the two arcs $C_1:=\{x=\sqrt{y^2+4\eta_1^2\eta_2^2} \}$ and $C_2:=\{x=(y^2+4\eta_1^2\eta_2^2+(\eta_1^2+\eta_2^2)^2)/(2(\eta_1^2+\eta_2^2))\}$, defined for $|y|\leq \eta^2_2-\eta_1^2$, which intersect at the points $(x,y)=(\eta_1^2+\eta_2^2, \pm(\eta_2^2-\eta_1^2))$. Note that $\psi_\pm(x,y)$ have no critical points in the interior of $P_2$. In fact it is immediate that $\nabla \psi_+$ cannot vanish, while $\nabla\psi_-(x,y)=0$  leads to $y=x(N_2^2-N_1^2)$ and hence to the contradiction $0=x^2-y^2-4x^2N_1^2N_2^2=4\eta_1^2\eta_2^2>0$. Thus the maximum and minimum of $\psi_\pm(x,y)$ are attained on either $C_1$ or $C_2$. After some calculations we obtain
\begin{eqnarray*}
m_+(\N)&=&\frac12(\eta_1^2+\eta_2^2+(\eta_2^2-\eta_1^2)|N_1^2-N_2^2|),\\
m_-(\N)&=&\left\{\begin{array}{ll}\frac{\eta_1^2\eta_2^2}{\eta_1^2+\eta_2^2}\Big(1+ \sqrt{1-|N_1^2-N_2^2|^2}\Big)&\mbox{ if }\:|N_1^2-N_2^2|\le s^*,\\
 \frac12(\eta_1^2+\eta_2^2-|N_1^2-N_2^2|(\eta_2^2-\eta_1^2))&\mbox{ if }\:|N_1^2-N_2^2|\ge s^*.
\end{array}\right.
\end{eqnarray*}
Noting that $m_-(\N)\leq \half(\eta_1^2+\eta_2^2)\leq m_+(\N)$,  that $m_+(\N)+m_-(\N)\leq \eta_1^2+\eta_2^2$, and that $\half(n_1^2\eta_1^2+n_2^2\eta_2^2)+\half(n_2^2\eta_1^2+n_1^2\eta_2^2)=\half(\eta_1^2+\eta_2^2)$ it follows that \eqref{maxmin} is equivalent to 
\be 
\label{min}
m_-(\N)\leq\min\{n_1^2\eta_1^2+n_2^2\eta_2^2, n_2^2\eta_1^2+n_1^2\eta_2^2\}=\half(\eta_1^2+\eta_2^2)-\half|n_2^2-n_1^2|(\eta_2^2-\eta_1^2).
\ee
With the relations $n_1^2-n_2^2=\cos 2\theta$ and $N_2^2-N_1^2=\cos 2(\theta+\alpha)$ this gives \eqref{theta}.
\end{proof}
\begin{thm}
\label{planar}
If the interface between the grains is planar then there always exists a zero-energy microstructure which is a pure variant in one of the grains.
\end{thm}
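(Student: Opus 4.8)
The plan is as follows. Fix the constant unit normal $\tilde\n=(\cos\theta,\sin\theta)$ of the planar interface. The ``converse'' construction described just before Theorem~\ref{char} shows that a zero-energy microstructure which is a pure variant in grain~$1$ exists as soon as \eqref{r1} is solvable with $\tilde\F\in\tilde K$ and $\tilde\A\in\tilde K^{\rm pc}\tilde\Rz_\alpha$; by Theorem~\ref{char} this happens precisely when $|\cos2\theta|\le f(|\cos2(\alpha+\theta)|)$. By the identical construction with the two grains interchanged, a zero-energy microstructure which is a pure variant in grain~$2$ exists as soon as \eqref{r1} is solvable with $\tilde\F\in\tilde K^{\rm pc}$ and $\tilde\A\in\tilde K\tilde\Rz_\alpha$. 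Hence it suffices to prove that, for the given $\theta$, at least one of these two solvability conditions holds.

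To put the second condition into usable form I would rerun the proof of Theorem~\ref{char} with the roles of the two grains exchanged: with $\tilde\F=\Q\tilde\U_i\tilde\Rz_\alpha$ and $\tilde\A\in\tilde K^{\rm pc}$, solvability of $\tilde\F-\tilde\A=\tilde\az\otimes\tilde\n$ is equivalent to $|\tilde\U_i\N|^2\in[m_-(\tilde\n^\perp),m_+(\tilde\n^\perp)]$ for $i=1$ or $2$, where $\N=\tilde\Rz_\alpha\tilde\n^\perp$ — this is \eqref{maxmin} with the unit vectors $\tilde\n^\perp$ and $\N$ interchanged. Since $m_\pm(v)$ and $|\tilde\U_iv|^2$ depend on a unit vector $v=(v_1,v_2)$ only through $v_1^2-v_2^2$, and $|(\tilde\n^\perp)_1^2-(\tilde\n^\perp)_2^2|=|\cos2\theta|$ while $|N_1^2-N_2^2|=|\cos2(\alpha+\theta)|$, the computation in the proof of Theorem~\ref{char} (the formulas for $m_\pm$ over $P_2$, and the final reduction) carries over verbatim with $|\cos2\theta|$ and $|\cos2(\alpha+\theta)|$ interchanged, so the second condition reads $|\cos2(\alpha+\theta)|\le f(|\cos2\theta|)$.

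Finally I would use the elementary fact that $f(s)\ge s$ for every $s\in[0,1]$: this is immediate since $f$ is convex and coincides with the identity on $[s^*,1]$, so the diagonal $\{y=s\}$ is the tangent line to the graph of $f$ at $s=s^*$, and a convex function lies above each of its tangent lines. Setting $u=|\cos2\theta|$ and $w=|\cos2(\alpha+\theta)|$, if both solvability conditions failed we would have $u>f(w)\ge w$ and $w>f(u)\ge u$, which is impossible; hence at least one of them holds, proving the theorem. I do not anticipate a real obstacle; the only step needing care is the verification in the second paragraph that nothing in the proof of Theorem~\ref{char} breaks when the pure variant is placed in the rotated grain rather than the unrotated one, and the rest is routine.
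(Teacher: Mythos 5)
Your proposal is correct and follows essentially the same route as the paper: the condition for a pure variant in grain 2 is obtained by symmetry (the paper phrases your swap as replacing $\theta$ by $\theta+\alpha$ and $\alpha$ by $-\alpha$ in Theorem \ref{char}), and the conclusion then follows from $f(s)\ge s$ exactly as in your closing contradiction $u>f(w)\ge w>f(u)\ge u$. Your extra verifications (rerunning the proof of Theorem \ref{char} with the grains exchanged, and the tangent-line justification of $f(s)\ge s$) are sound but simply spell out what the paper leaves implicit.
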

\begin{proof}
The case of a pure variant in grain 2 and a zero-energy microstructure in grain 1 corresponds to replacing $\theta$ by $\theta+\alpha$ and $\alpha$ by $-\alpha$ in the above. Hence, since $f(s)\geq s$,  if the conclusion of the theorem were false,  Theorem \ref{char} would imply that $|\cos 2\theta|> f(|\cos 2(\alpha+\theta)|)\geq|\cos 2(\theta+\alpha)|>f(|\cos 2\theta|)\geq|\cos 2\theta|$, a contradiction.
\end{proof}
\noindent Thus   to rule out having a pure variant in one  grain the interface {\it cannot} be planar. 
\begin{thm}
\label{twonormals}
There is no zero-energy microstructure which is a pure variant in one of the grains if the interface between the grains has a normal $\n^{(1)}=(\cos\theta_1, \sin\theta_1,0)\in E_1$ and a normal $\n^{(2)}=(\cos\theta_2, \sin\theta_2,0)\in E_2$, for the disjoint open sets
\be 
\nonumber E_1=\{\theta\in\R: f(|\cos 2(\theta+\alpha)|)<|\cos 2\theta|\},\;\;E_2=\{\theta\in\R:f(|\cos(2\theta)|) <|\cos 2(\theta+\alpha)|\}.
\ee
\end{thm}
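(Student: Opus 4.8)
The plan is to derive the theorem from \thmref{char} together with the reduction to plane strain and the two-dimensional Hadamard jump condition already established in the paragraph preceding \thmref{char}, which show that a zero-energy microstructure that is a pure variant in one grain forces the rank-one compatibility relation \eqref{r1} to hold at \emph{every} point of the interface at which a normal is defined. I argue by contradiction, treating the two grains separately.

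First, suppose there is a zero-energy microstructure with $\nu_\x=\delta_\F$, $\F\in K$, for a.e.\ $\x\in\om_1$. Choose a point $\x_0$ on the interface at which the (a.e.-defined) normal equals $\n^{(1)}=(\cos\theta_1,\sin\theta_1,0)$; such a point exists by hypothesis. Applying the local plane-strain representation \eqref{planestrain} near $\x_0$, the reduced relations \eqref{reduced}, and the two-dimensional generalization of the Hadamard jump condition, we obtain $\tilde\A\in(\tilde K\tilde\Rz_\alpha)^{\rm pc}$ with $\tilde\F-\tilde\A=\tilde\az\otimes\tilde\n^{(1)}$, where $\tilde\n^{(1)}=(\cos\theta_1,\sin\theta_1)$. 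By \thmref{char} this yields $|\cos 2\theta_1|\le f(|\cos 2(\alpha+\theta_1)|)$, that is $\theta_1\notin E_1$, contradicting the assumption $\n^{(1)}\in E_1$. Hence there is no zero-energy microstructure which is a pure variant in grain~1.

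Second, for a pure variant in grain~2 I use the symmetry noted in the proof of \thmref{planar}: this case corresponds to replacing $\theta$ by $\theta+\alpha$ and $\alpha$ by $-\alpha$ in \thmref{char}, so that compatibility of such a microstructure across an interface normal of angle $\theta$ requires $|\cos 2(\theta+\alpha)|\le f(|\cos 2\theta|)$, which is precisely the negation of the inequality defining $E_2$. Choosing a point on the interface with normal $\n^{(2)}$, $\theta_2\in E_2$, and repeating the argument of the previous paragraph leads once more to a contradiction, so no zero-energy microstructure is a pure variant in grain~2. Combining the two cases proves the theorem. For completeness one also records that $E_1$ and $E_2$ are open, since $f$ is continuous and they are cut out by strict inequalities between continuous functions of $\theta$, and that they are disjoint: $f(s)\ge s$ on $[0,1]$, so $\theta\in E_1\cap E_2$ would give $|\cos 2(\theta+\alpha)|\le f(|\cos 2(\theta+\alpha)|)<|\cos 2\theta|\le f(|\cos 2\theta|)<|\cos 2(\theta+\alpha)|$, exactly as in \thmref{planar}.

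The only step requiring genuine care is the passage from the existence of a zero-energy microstructure that is a pure variant in grain~1 to the pointwise relation \eqref{r1} at the \emph{specific} interface point carrying the normal $\n^{(1)}$: this rests on the local plane-strain structure from \cite{j39} and the two-dimensional Hadamard jump condition from \cite{u5}, and one must verify that these apply at an arbitrary point of the interface at which a normal exists, not merely at one such point. Everything else is a direct invocation of \thmref{char} and the elementary bound $f(s)\ge s$ on $[0,1]$.
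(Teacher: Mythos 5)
Your proposal is correct and is essentially the paper's own argument: the paper's proof is the one-line remark that the result ``follows immediately from Theorem \ref{char} and the preceding discussion,'' and your writeup simply fills in that reasoning --- applying the plane-strain reduction and the jump condition \eqref{r1} at an interface point with normal in $E_1$ (resp.\ $E_2$, via the $\theta\mapsto\theta+\alpha$, $\alpha\mapsto-\alpha$ symmetry) and contradicting the inequality \eqref{theta}, with openness and disjointness of $E_1,E_2$ checked via $f(s)\ge s$ exactly as in Theorem \ref{planar}. No gaps; the point you flag about the jump condition holding at any interface point where the normal exists is indeed how the paper's preceding discussion is set up (the point $\x_0$ there is arbitrary subject only to the normal being defined).
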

\begin{proof}This follows immediately from Theorem \ref{char} and the preceding discussion.
\end{proof}
  In the case $\alpha=\pi/4$ the sets $E_1, E_2$ take a simple form (note that the normals not in $E_1\cup E_2$ correspond to the rank-one connections between the wells found in Section \ref{rankone}).
\begin{thm}
\label{piby4}
Let $\alpha=\pi/4$ and suppose\footnote {This extra condition, typically satisfied in practice, was accidentally omitted in the announcement in \cite{p35}.} that $\eta_2^2/\eta_1^2<1+\sqrt{2}$. Then 
\be
\label{D1a} E_1=\bigcup_j((4j-1)\pi/8,(4j+1)\pi/8),\;\;
 E_2=\bigcup_j((4j+1)\pi/8,(4j+3)\pi/8).
\ee
\end{thm}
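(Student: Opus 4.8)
The plan is to specialise the description of $E_1,E_2$ in Theorem~\ref{twonormals} to $\alpha=\pi/4$, where the two trigonometric quantities become complementary. Since $\cos 2(\theta+\pi/4)=\cos(2\theta+\pi/2)=-\sin 2\theta$, we have $|\cos 2(\theta+\alpha)|=|\sin 2\theta|$, so
\[
E_1=\{\theta:f(|\sin 2\theta|)<|\cos 2\theta|\},\qquad E_2=\{\theta:f(|\cos 2\theta|)<|\sin 2\theta|\}.
\]
Writing $u=|\cos 2\theta|$, so that $|\sin 2\theta|=\sqrt{1-u^2}$, membership in $E_1$ is the condition $f(\sqrt{1-u^2})<u$ and membership in $E_2$ is $f(u)<\sqrt{1-u^2}$.

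The next point---and really the only one with content---is to see what the hypothesis $\eta_2^2/\eta_1^2<1+\sqrt2$ buys us. A direct computation shows that $s^*=(\tau^4-1)/(\tau^4+1)<1/\sqrt2$ is equivalent to $(\sqrt2-1)\tau^4<1+\sqrt2$, hence to $\tau^4<(1+\sqrt2)^2$, hence to $\tau^2=\eta_2^2/\eta_1^2<1+\sqrt2$; equivalently $\sqrt{1-(s^*)^2}=2\tau^2/(\tau^4+1)>1/\sqrt2$. Thus, precisely under the stated hypothesis, the break point $s^*$ of $f$ lies strictly below $1/\sqrt2$, so $f$ is the identity on $[1/\sqrt2,1]$; in particular $f(1/\sqrt2)=1/\sqrt2$.

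With this in hand I would establish $E_1=\{\theta:|\cos 2\theta|>1/\sqrt2\}$ using only that $f$ is increasing (see \eqref{f}) and equals the identity above $s^*$. If $|\cos 2\theta|>1/\sqrt2$ then $|\sin 2\theta|<1/\sqrt2$, so $f(|\sin 2\theta|)\le f(1/\sqrt2)=1/\sqrt2<|\cos 2\theta|$, whence $\theta\in E_1$. Conversely, if $|\cos 2\theta|\le1/\sqrt2$ then $|\sin 2\theta|\ge1/\sqrt2>s^*$, so $f(|\sin 2\theta|)=|\sin 2\theta|\ge1/\sqrt2\ge|\cos 2\theta|$ and $\theta\notin E_1$. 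The identical argument with cosine and sine interchanged yields $E_2=\{\theta:|\sin 2\theta|>1/\sqrt2\}=\{\theta:|\cos 2\theta|<1/\sqrt2\}$.

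Finally I would convert these into the claimed intervals. Since $\cos 4\theta=2\cos^2 2\theta-1$, the condition $|\cos 2\theta|>1/\sqrt2$ is $\cos 4\theta>0$, i.e.\ $4\theta\in\bigcup_k(2k\pi-\tfrac{\pi}{2},2k\pi+\tfrac{\pi}{2})$, i.e.\ $\theta\in\bigcup_j((4j-1)\pi/8,(4j+1)\pi/8)$; similarly $|\cos 2\theta|<1/\sqrt2$ is $\cos 4\theta<0$, giving $\theta\in\bigcup_j((4j+1)\pi/8,(4j+3)\pi/8)$, which is \eqref{D1a}. I do not expect a genuine obstacle here: the substance is the observation in the second paragraph that $\eta_2^2/\eta_1^2<1+\sqrt2$ is exactly the condition making $s^*<1/\sqrt2$, so that $f$ plays no role at or above the symmetric value $|\cos 2\theta|=|\sin 2\theta|=1/\sqrt2$; everything else is routine bookkeeping of strict versus non-strict inequalities, consistent with the endpoints $\theta=(2k+1)\pi/8$---the rank-one connection normals from Section~\ref{rankone}---lying in neither $E_1$ nor $E_2$.
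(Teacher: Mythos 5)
Your proposal is correct and follows essentially the same route as the paper's proof: both hinge on the observation that $\eta_2^2/\eta_1^2<1+\sqrt{2}$ is equivalent to $s^*<1/\sqrt{2}$, so that $f(1/\sqrt{2})=1/\sqrt{2}$ and monotonicity of $f$ (together with $f=\mathrm{id}$ above $s^*$) reduces membership in $E_1$, $E_2$ to $|\cos 2\theta|\gtrless 1/\sqrt{2}$, which gives the stated intervals. Your write-up is merely more explicit about the converse direction and the endpoint bookkeeping than the paper's terse ``if and only if''.
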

\begin{proof}
Note that $\eta_2^2/\eta_1^2<1+\sqrt{2}$ if and only if $s^*< 1/\sqrt{2}$. Therefore if $|\sin 2\theta|<1/\sqrt{2}$ then $f(|\sin 2\theta|)<f(1/\sqrt{2})=1/\sqrt{2}<|\cos 2\theta|$. Hence $f(|\sin 2\theta|)<|\cos 2\theta|$ if and only if  $|\sin 2\theta|<1/\sqrt{2}$. This holds if and only if $\theta\in (\pi/2)\mathbb{Z}  + (-\pi/8, \pi/8)=E_1$. The case of $E_2$ is treated similarly.
\end{proof}\vspace{-.15in}

\section{Discussion} Compatibility across grain boundaries in polycrystals using a linearized elastic theory is discussed in \cite{bhattkohn97}, \cite[Chapter 13]{bhattacharya03a}. Whereas we use the nonlinear theory we are restricted to a very special assumed geometry and phase transformation. Nevertheless we are able to determine conditions allowing or excluding a pure variant in one of the grains without any {\it a priori} assumption on the microstructure (which could potentially, for example, have a fractal structure near the interface). This was possible using a generalized Hadamard jump condition from \cite{u5}. The restriction to a two-dimensional situation is due both to the current unavailability of a suitable three-dimensional generalization of such a jump condition, and because the quasiconvexification of the martensitic energy wells is only known for two wells. \vspace{.01in}

\noindent {\small {\bf Acknowledgement}.  The research of JMB was supported  by the EU (TMR contract FMRX - CT  EU98-0229 and
ERBSCI**CT000670), by 
EPSRC
(GRlJ03466, EP/E035027/1, and EP/J014494/1), the ERC under the EU's Seventh Framework Programme
(FP7/2007-2013) / ERC grant agreement no 291053 and
 by a Royal Society Wolfson Research Merit Award.}
 
\bibliography{gen2,balljourn,ballconfproc,ballprep}
\bibliographystyle{plain}

\end{document}